\pgfplotsset{compat=newest}
\DeclareMathAlphabet{\mymathbb}{U}{BOONDOX-ds}{m}{n}
\newcounter{thms}
\newtheorem{theorem}[thms]{Theorem}
\newtheorem{corollary}[thms]{Corollary}
\newcounter{asses}
\newtheorem{assumption}[asses]{Assumption}
\newcounter{defs}
\newtheorem{definition}[defs]{Definition}
\newcounter{rems}
\newtheorem{remark}[rems]{Remark}
\theoremstyle{remark}
\newcommand{\R}{\mathbb{R}}
\newcommand{\bb}[1]{\mymathbb{#1}}
\newcommand{\bs}[1]{\boldsymbol{#1}}
\newcommand{\mc}[1]{\mathcal{#1}}
\newcommand{\mt}[1]{\textnormal{#1}}
\newcommand{\lt}{\left}
\newcommand{\rt}{\right}
\newcommand{\beeq}{\begin{equation}}
\newcommand{\eneq}{\end{equation}}
\newcommand{\matb}{\begin{matrix}}
\newcommand{\mate}{\end{matrix}}
\DeclareMathOperator*{\argmin}{arg\,min}
\definecolor{redtmp}{rgb}{0, 0, 0}
\definecolor{redtmp2}{rgb}{0, 0, 0}
\colorlet{myr}{Maroon}
\colorlet{myg}{OliveGreen}
\colorlet{myb}{NavyBlue}
\colorlet{myy}{Dandelion}
\title{\LARGE \bf
Regularization for distributionally robust state estimation and prediction
}
\author{Jean-Sébastien Brouillon, Florian Dörfler, and Giancarlo Ferrari-Trecate% <-this % stops a space
\thanks{This research is supported by the Swiss National Science Foundation under the NCCR Automation (grant agreement 51NF40\_180545).}% <-this % stops a space
\thanks{J.S. Brouillon and G. Ferrari-Trecate are with the Institute of Mechanical Engineering, \'Ecole Polytechnique F\'ed\'erale de Lausanne, Switzerland. Email addresses: {\tt\small \{jean-sebastien.brouillon, giancarlo.ferraritrecate\}@epfl.ch}.}%
\thanks{F. Dörfler is with the Automatic Control Laboratory, Swiss Federal Institute of Technology (ETH), Switzerland. Email address: {\tt\small dorfler@control.ee.ethz.ch}.}%
}
\begin{document}
%\bstctlcite{IEEEexample:BSTcontrol}

\maketitle
\thispagestyle{empty}
%\pagestyle{empty}

%%%%%%%%%%%%%%%%%%%%%%%%%%%%%%%%%%%%%%%%%%%%%%%%%%%%%%%%%%%%%%%%%%%%%%%%%%%%%%%%
\begin{abstract}
The increasing availability of sensing techniques provides a great opportunity for engineers to design state estimation methods, which are optimal for the system under observation and the observed noise patterns. However, these patterns often do not fulfill the assumptions of existing approaches. We provide a direct method using samples of the noise to create a moving horizon observer for linear time-varying and nonlinear systems, which is optimal under the empirical noise distribution. Moreover, we show how to enhance the observer with distributional robustness properties in order to handle unmodeled components in the noise profile, as well as different noise realizations. We prove that, even though the design of distributionally robust estimators is a complex minmax problem over an infinite-dimensional space, it can be transformed into a regularized linear program using a system level synthesis approach. Numerical experiments with the Van der Pol oscillator show the benefits of not only using empirical samples of the noise to design the state estimator, but also of adding distributional robustness. We show that our method can significantly outperform state-of-the-art approaches under challenging noise distributions, including multi-modal and deterministic components.
\end{abstract}

\begin{IEEEkeywords}
State estimation, Distributional robustness, Moving Horizon Estimation.
\end{IEEEkeywords}

%%%%%%%%%%%%%%%%%%%%%%%%%%%%%%%%%%%%%%%%%%%%%%%%%%%%%%%%%%%%%%%%%%%%%%%%%%%%%%%%
\section{Introduction} \label{section_intro}
\IEEEPARstart{E}{stimating} and predicting the states of a system is a fundamental problem in many areas of science and engineering, ranging from control theory to signal processing and machine learning. The goal is to use a set of noisy and possibly incomplete observations of the system's output to infer the true internal state of the system with minimal error. The problem of state smoothing\textcolor{redtmp}{, filtering,} and prediction (hereafter referred to as state estimation problem, for short) is challenging due to several factors, such as the presence of measurement noise, unmodeled dynamics, nonlinearities, and uncertainty. The recent advances in sensing and communications technologies and computation have allowed engineers to gather large amounts of data about the noise affecting systems of various nature.

The design of a high-performance state estimator for a given system follows three steps: (i) the accurate modeling of the system dynamics and the statistics of the process and measurement noises, (ii) the choice of an estimator that best fits the model and noise assumptions, and (iii) the optimization of the estimator parameters. This process can be difficult, especially if the noises follow an uncommon profile (e.g., including outliers or deterministic signals), or if the system is time-varying. In the latter case, the design process must be repeated online.% after each change in the dynamics, or offline for every possible situation.

%(i) the accurate modeling of the system dynamics and the statistics of the process and measurement noises, (ii) the choice of an estimator that best fits the model and noise assumptions, and (iii) the optimization of the estimator parameters.
%First, both the system dynamics and the statistics of process and measurement noises must be accurately modeled. Second, one chooses the estimation method that fits the problem best. Third, the required parameters to implement this method must be computed from the system and noise models.

The most popular estimation method is the Kalman Filter (KF), which has a closed form solution that can be computed online. This is the backbone of the Extended Kalman Filter (EKF), which recomputes the filter parameters at each time step based on the linearization of a system at the current operating point \cite{ekf_og}. KFs may not perform well when the variance is not accurately measured, even if the noise is Gaussian. To address this issue, \cite{colored_noise_gp} proposes an automatic method for learning KF parameters. Another popular estimation method is to stabilize the error dynamics and reject errors in the initial state estimate using a Luenberger Observer (LO). While the KF provides optimality guarantees for linear systems under Gaussian noise, the LO can be a better candidate for other noise distributions, even though its optimal design is challenging in real time.

%The most common method for state estimation is the Kalman filter, which gives a minimal error under the assumption that the process is linear and both the process noise (or disturbance) and the measurement noise are Gaussian \cite{kalman_og}. If no Gaussian distribution is observable in the noise samples, a popular method is to stabilize the error dynamics to eventually reject errors in the initial state estimate using a Luenberger Observer (LO) \cite{luenberger}. The LO is simpler to implement and does not rely on a specific noise distribution, but its performance and robustness are not optimal. % To handle modeling errors, newer and improved versions of both KFs and LOs have been developed, e.g., the Extended Kalman filter \cite{ekf_og} for nonlinearities. However, these methods are often designed with problem-specific assumptions in mind.

When dealing with non-Gaussian disturbances, particle filters are a popular approach, but they are computationally expensive and do not exploit specific patterns in non-stochastic noise profiles. Other methods involve learning the non-stochastic part of the noise and assuming standard Gaussian or worst-case distributions for the stochastic component \cite{nonstoch_gauss, nonstoch_bounded}. However, these approaches still make strong assumptions about the noise, which can lead to poor performance if they are not verified. A more flexible method is Moving Horizon Estimation (MHE). It can model not only non-stochastic profiles by penalizing combinations of errors at different time steps, but also non-Gaussian noise distributions using non-quadratic cost functions\footnote{Although we focus on the unconstrained case in this paper, MHE can also implement constraints \cite{alessandri2008moving}.} \cite{book_mpc}. However, MHE requires significant computing power and can be sensitive to modeling errors in both the noise statistics and the system itself \cite{robust_mhe}.

Distributionally Robust Optimization (DRO) is a powerful mathematical tool to mitigate errors in the statistical modeling of the noise, by considering the worst probability distribution within an uncertainty set around the empirical one \cite{wasserstein_kuhn}. Recent advances in this field have significantly simplified the computation of robust optimizers, by showing the equivalence between distributional robustness and regularization \cite{soroosh_mass_transport, multivar_relax_dro_chen}. DRO has recently been applied to Model Predictive Control (MPC) and Data-enabled Predictive Control (DeePC) to provide a direct method from noise samples to controller design \cite{dr_mpc, dr_mpc_liviu, dr_deepc}. This approach has only been applied to the field of state estimation under the assumption that the worst case distribution is Gaussian \cite{dr_mhe, dr_mhe2, dr_ekf}.

%This paper provides a novel robust MHE method based on DRO, which directly incorporates samples of the noise profile, eliminating any statistical modeling work. Moreover, because the DRO literature only provides solutions for a relaxed problem when the parameters are two-dimensional, we provide a new proof that regularization is exactly equivalent to distributional robustness for a common type of uncertainty sets. Finally, we show that the final estimation problem is a combination of several small Linear Programs (LPs), which are very fast to solve.

In this paper, we attempt to fill the gap and introduce a robust unconstrained MHE method that uses DRO to incorporate samples of the noise profile directly in the estimation process, hence eliminating the need for statistical modeling. To do so, \textcolor{redtmp2}{we prove that the regularization-based relaxation proposed in \cite{multivar_relax_dro_chen} can be exact for $\ell_1$ norm-based loss functions, which are relevant for MHE. This extends the results obtained for vector-valued parameters in \cite{wasserstein_kuhn, soroosh_mass_transport}}. 
%we provide a new proof that regularization is equivalent to distributional robustness for matrix-valued parameters and common uncertainty, which extends the results for vector-valued parameters in \cite{wasserstein_kuhn, soroosh_mass_transport} and the relaxations in \cite{multivar_relax_dro_chen}. 
We show that our approach is capable of providing both predictions and filtered state estimates for discrete-time linear time-varying systems. % and nonlinear systems that are linearized at each time step around the current operation point (similar to the EKF). 
Moreover, the final estimation problem is a combination of several small Linear Programs (LPs), which can be efficiently solved in real time. Finally, we provide a simulation example illustrating the performance of this new method for the observation of a linearized Van der Pol oscillator under challenging noise profiles. %in the context of a pacemaker tracking the state of the heart modelled by a Van der Pol oscillator \cite{vdp_pacemaker}.

\subsection{Preliminaries and Notation}
\label{subsec:notation}
Time indices are denoted by the subscript $t$, and boldface letters denote the stacked vectors at all times in a window. Similarly, calligraphic letters denote linear operators applying to such stacked vectors. Underlined bold symbols are trajectory matrices, whose columns are bold symbol vectors. For example, for a state $x_t \in \bb R^n$, the trajectory over the window $[t-T,t]$ is $\bs x = [x_{t-T}^\top, \dots, x_t^\top]^\top \in \bb R^{n(T+1)}$, which can be affected by the operator $\mc C$ such that $\bs y = \mc C \bs x$. If $N$ of these trajectories are available, they can be included in the matrix $\underline{\bs x} = [\bs x_0, \dots, \bs x_N] \in \bb R^{n(T+1) \times N}$.% In the whole paper, we call noise both disturbance (or process noise) and measurement noise, and specify its type when needed.

The subscript $i$ is used to denote the $i^{th}$ row of a matrix. The matrix $I$ denotes the identity and $I_i$ is the $i^{th}$ unit vector. The function $\mt{blkdiag}([X_0, \dots, X_N])$ constructs a block-diagonal matrix from the blocks $X_0, \dots, X_N$. %The matrix $\mc Z$ is the block-downshift operator and is constructed by shifting the diagonal of the identity matrix down $n$ times for blocks of size $n$.

The $\ell_2$-norm of a vector is denoted by $\|\cdot\|_2$, which also denotes the spectral norm of a matrix (its largest singular value). The $\ell_1$ norm of a vector is denoted by $\|\cdot\|_1$, and $\|\cdot\|_{F_1}$ is the $\ell_1$ Frobenius norm of a matrix, i.e. the sum of the $\ell_1$ norms of its rows.

\section{System model} \label{section_model}
%\subsection{Estimation} \label{subsec_estim_model}

%We model a generic system in the state-space with Linear Time-Varying (LTV) dynamics, subject to unknown disturbance and measurement noise. In order to estimate and predict the system's states, we design a sequence of Luenberger observers, which apply to several time steps in a window around the present. %This policy-oriented estimator allows to mitigate the error over a set of noise realizations rather than only the current one, as well as to react to time correlations in the same way as MHE.

\subsection{LTV dynamics and observer}\label{subsec_dyn}

We model a dynamical system using a discrete-time state-space representation, where the state $x_t \in \bb R^n$ is hidden, and only the output $y_t \in \bb R^p$ is observed. The state dynamics and output map are fully described by the equations
\begingroup
\abovedisplayskip=4pt
\belowdisplayskip=4pt
\begin{subequations}\label{eq_model_system}
\begin{align}
    x_{t+1} &= A_t x_t + w_{t}, \\
    y_{t} &= C_t x_t + v_{t},
\end{align}
\end{subequations}
\endgroup
where $w_t$ and $v_t$ are generic process and measurement noises.
\begin{assumption}\label{ass_obs}
    The system \eqref{eq_model_system} is observable for all $t$.
\end{assumption}
\noindent
This assumption in very common and often necessary to estimate the states of a system \cite{luenberger}.

We aim to compute the estimates $\hat x_\tau$ of the states in the window $[t-T_s, \dots, t+T_f]$ around the current time $t$. To do so, we use the following state estimator
\begingroup
\abovedisplayskip=5pt
\belowdisplayskip=3pt
\vspace{0pt}
\begin{align}\label{eq_observer_apply}
\nonumber \\[-12pt]
    \hat x_{\tau+1} &= A_{\tau} \hat x_{\tau} - \sum_{k=t-T_s}^{t} L_{\tau,k} (C_{k} \hat x_{k} - y_{k}),
\end{align}
\endgroup
which uses the observations $y_k$ for $k = t-T_s, \dots, t$, and design the gains $L_{\tau,k}$ for $\tau = t-T_s, \dots, t+T_f-1$. %This gives the state estimate as follows.
The observer gains must stabilize the dynamics of the error $e_\tau = \hat x_\tau - x_\tau$, given by
\begingroup
\belowdisplayskip=3pt
\vspace{-6pt}
\begin{align}\label{eq_observer_apply_error}
    \textcolor{redtmp}{e_{\tau+1} = A_{\tau} e_{\tau} - w_\tau - \sum_{k=t-T_s}^{t} L_{\tau,k} (C_{k} e_{k} - v_k).}
\end{align}
\endgroup

There are two main differences between \eqref{eq_observer_apply} and the classical MHE problem \cite{mhe_og_rao}: (i) the presence of a forecasting horizon $[t+1, t+T_f]$ after the standard smoothing horizon $[t-T_s, t]$ and (ii) the optimization variables are matrix gains $L_{\tau,k}$, rather than the point estimates $\hat x_{\tau}$. \textcolor{redtmp}{This policy-based problem, similar to dynamic programming for control \cite[Chapter 3.3]{book_mpc}, improves the estimate's robustness, while giving the same results as classic MHE in nominal conditions.}  %This allows one to both maximize both the accuracy of the final estimate and minimize the effect of the noise \cite{book_mpc}. %In contrast, point estimates are faster to compute but requires further calculations to study the impact of the noise (e.g., through a sensitivity matrix).
%\cite{book_mpc}

\begin{remark}
Known system inputs are not included in the observer design since they cancel out when computing the error $e_\tau = \hat x_\tau - x_\tau$. If present, they can be added to \eqref{eq_observer_apply} when computing the state estimate.
\end{remark}

\begin{remark}\label{rem_nonlin}
There are no assumptions on both $v_t$ and $w_t$, which can also include modelling errors. For example, if \eqref{eq_model_system} representes the linearization of the system $x_{t+1} = f(x_t, t) + \tilde w_{t}$, $y_{t} = h(x_t, t) + \tilde v_{t}$ around a state trajectory% by setting $A_t = \nabla_x f(\hat x_t, t)$, $C_t = \nabla_x h(\hat x_t,t)$
, the variables $v_t$ and $w_t$ can embed \textcolor{redtmp}{worst-case} linearization errors.
% \begin{align}\label{eq_observer_apply_nonlin}
%     \hat x_{t'+1} &= f(\hat x_{t'}, t') - \sum_{\tau=t-T_s}^{t} L_{t',\tau} (h(\hat x_{\tau}, \tau) - y_{\tau}).
% \end{align}
\end{remark}
%In Section \ref{section_results}, we take advantage of the aforementioned linearization to study a nonlinear system. 
In the sequel, we consider the estimation problem for a single horizon with a fixed $t$. Hence, for simplicity $t$ is omitted in the notation.

\subsection{Error dynamics over the entire horizon}\label{subsec_mats_def}
To design the LTV observer policy based on the gains $L_{\tau,k}$ for $\tau \in [t-T_s, t+T_f], k \in [t-T_s, t]$, we stack the dynamics of the state estimation error \eqref{eq_observer_apply_error} as
\begin{align}\label{eq_error_sys_aug}
    \bs e = \mc Z \mc A \bs e - \mc L \mc C \mc Z \bs e + \mc L \bs v + \bs w,
\end{align}
\vspace{-12pt}
\begingroup % keep the change local
\renewcommand{\arraystretch}{0.85}
\begin{subequations}\label{eq_def_FHvw}
\begin{align}\label{eq_def_mcac}
&\mt{where } 
\mc Z = \!\!\lt[\matb \vspace{-8pt}\\
    0_{n\times n} & & & \\
    \vspace{-16pt}
    \\ \hspace{-4pt}I & \hspace{-12pt}\ddots & & \\
    \vspace{-16pt}
    \\ & \hspace{-16pt}\ddots & \hspace{-8pt}\ddots &
    \\ & & \hspace{-16pt}I & \hspace{-8pt}0_{n\times n}
\mate \rt]\!\!,\,
\\
&\mc A = \!\!\lt[\matb \vspace{-6pt}\\
    A_{t-T_s}\\
    \vspace{-15pt}
    \\ & \hspace{-20pt}\ddots\hspace{-20pt} & &
    \\ & & \hspace{-20pt} A_{t+T_f-1} &
    \\ & & & \hspace{-20pt}0_{n\times n}
\mate \rt]\!\!,\,
\mc C = \!\!\lt[\matb \vspace{-6pt}\\
    0_{n\times p} & & & 
    \\ & \hspace{-16pt} C_{t-T_s} & & \\
    \vspace{-15pt}
    \\ & & \hspace{-20pt}\ddots\hspace{-20pt} &
    \\ & & & \hspace{-20pt}C_{t+T_f-1}
\mate \! \rt]\!\!,
\\
&\,\bs v = \!\!\lt[\matb 0_{p \times 1} \\ v_{t-T_s} \\ \vspace{-15pt}\\ \vdots \\ v_{t+T_f-1} \mate\rt]\!\!,\,
\bs w =\! \! \lt[\matb e_{t-T_s} \\ -w_{t-T_s} \\ \vspace{-15pt}\\ \vdots \\ -w_{t+T_f-1} \mate\rt]\!\!,\,
\bs e = \!\!\lt[\matb e_{t-T_s} \\ \vspace{-15pt}\\ \vdots \\ e_{t+T_f-1} \\ e_{t+T_f} \mate\rt]\!\!,
\end{align}
\setlength{\arraycolsep}{2pt}
% \begin{subequations}\label{eq_def_FHvw}
% \begin{align}\label{eq_def_phie}
% \mc A &= \lt[\matb 
%     \nabla_x f(\hat x_{t-T_s},t-T_s) \hspace{-100pt}
%     \\ & \hspace{-50pt} \ddots & &
%     \\ & & \hspace{-50pt} \nabla_x f(\hat x_{t+T_f-1}, t+T_f-1) \hspace{-50pt} &
%     \\ & & & \hspace{-70pt} 0_{n\times n}
% \mate \hspace{-9pt} \rt],
% \\
% \mc C &= \lt[\hspace{-5pt}\matb 
%     \hspace{25pt} 0_{n\times p} & & & 
%     \\ & \hspace{-50pt} \nabla_x h(\hat x_{t-T_s+1},t-T_s+1) \hspace{-50pt} & & 
%     \\ & & \hspace{-50pt} \ddots &
%     \\ & & & \hspace{-50pt} \nabla_x f(\hat x_{t+T_f}, t+T_f)
% \mate\rt],
% \\
% \bs v &= \!\lt[\matb 0 \cdot v_{t} \\ v_{t-T_s} \\ \vdots \\ v_{t+T_f-1} \mate\rt]\!\!,
% \bs w =\! \! \lt[\matb e_{t-T_s} \\ -w_{t-T_s} \\ \vdots \\ -w_{t+T_f-1} \mate\rt]\!\!,
% \bs e = \lt[\matb e_{t-T_s} \\ \vdots \\ e_{t+T_f-1} \\ e_{t+T_f} \mate\rt]\!\!,
% \end{align}
% \end{subequations}
and $\mc L$ is the observer policy to be designed, written as
\begin{align}\label{eq_L_def}
    \!\!\mc L = \!\!\lt[\matb \vspace{-6pt}\\
    0_{n \times p}\! & 0_{n \times p} & \dots & 0_{n \times p} & \!0_{n \times p(T_f-1)}
    \\
    0_{n \times p}\! & L_{t-T_s, t-T_s} & \dots & L_{t-T_s, t} & \!0_{n \times p(T_f-1)}
    \\
    \vdots & \vdots & \vdots & \vdots & \vdots
    \\
    0_{n \times p}\! & L_{t+T_f-1, t-T_s} & \dots & L_{t+T_f-1, t} & \!0_{n \times p(T_f-1)}
    \mate\rt]\!\!\!. \!\!
\end{align}
\end{subequations}
\endgroup
The last block-columns in $\mc L$ are zero to ensure causality, meaning that the last $p(T_f-1)$ measurements in the window $[t-T_s, t+T_f]$, which are in the future, can not be used. The zero first block-column and -row allow one to ensure the equivalence between \eqref{eq_observer_apply} and \eqref{eq_error_sys_aug}, as the first $n$ equations of \eqref{eq_error_sys_aug} only ensure that the initial error $e_{t-T_s}$ is correctly propagated over time.

Note that $\mc A$ and $\mc C$ are block-diagonal matrices and that $\bs v$ and $\bs w$ include both the noise and the modelling errors. Moreover, the error on the initial state is embedded in the first block of the disturbance vector, i.e. $[\bs w_{1}, \dots, \bs w_{n}]^\top = x_{t-T_s} - \hat x_{t-T_s}$. In the sequel, we note the dimensions of $\bs v$ and $\bs w$ as $\bs p = p(T_s + T_f +1)$ and $\bs n = n(T_s + T_f +1)$, respectively.

\vspace{-2pt}
\section{Problem statement} \label{subsec_problem}

%Classical MHE problems aim to find the estimates $\hat x_t$, which minimize a nominal cost function $\mt{cost}(e_t)$ of the estimation error. This nominal approach works well under specific noise distributions, but its performance can be seriously reduced by unforeseen noise realizations. To mitigate the impact of such noise, we propose a distributionally robust approach, which minimizes the worst-case empirical risk generated by the estimation error. 
We aim to design an optimal data-driven observer $\mc L$ from $N$ noise samples $\tilde{\bs v}_i$ and $\tilde{\bs w}_i$ for $i = 1, \dots, N$ collected offline, \textcolor{redtmp}{e.g. during tests prior to the deployment of the observer in the field.} A naive approach is to \textcolor{redtmp}{maximize the likelihood based on the} empirical distributions $\tilde{\bb P}_v(\bs v) = \frac{1}{N}\sum_{i=1}^N \delta(\bs v - \tilde{\bs v}_i)$ and $\tilde{\bb P}_w(\bs w) = \frac{1}{N}\sum_{i=1}^N \delta(\bs w - \tilde{\bs w}_i)$, where $\delta(\cdot)$ is the Dirac distribution. This nominal method gives minimal errors for realizations of the noise that were in the training set, but can lead to a brittle estimator with poor out-of-sample performance. We introduce distributional robustness with respect to the worst case empirical risk in order to mitigate the impact of unforeseen noise realizations.

The worst-case empirical risk is given by the expected cost given by the worst possible probability distribution. For probability distributions in the sets $\bb V$ and $\bb W$ (i.e., $v \sim \bb P_v \in \bb V$ and $w \sim \bb P_w \in \bb W$), the worst-case empirical risk is defined by
\begin{align}\label{eq_risk_def}
\textcolor{redtmp}{
    \mc R(\bs e(\mc L, \bs v, \bs w)) := \sup_{\substack{
    \bb P_v \in \bb V
    \\
    \bb P_w \in \bb W
    %\\
    %\mc A \mt{ s.t. } \|\mc A - \mc F\|_F \leq \varepsilon_A
    %\\
    %\mc C \mt{ s.t. } \|\mc C - \mc U\|_F \leq \varepsilon_A
    }} 
    \bb E_{\substack{v \sim \bb P_v \\ w \sim \bb P_w}} 
    \; \mt{cost}(\bs e(\mc L, \bs v, \bs w)).
}
\end{align}
\begin{assumption}\label{ass_cost}
The estimation cost is $\textnormal{cost}(\bs e) = \|\mc Q \bs e\|_1$, \textcolor{redtmp}{where $\mc Q \in \bb R^{\bs n \times \bs n}$.}
\end{assumption}
\noindent
Although quadratic costs are more common in engineering applications, $\ell_1$ costs are often used for their robustness to non-Gaussian noise \cite{lad_regression}. 
%The main limitation of $\ell_1$ Wassterstein balls is that the cost must be Lipschitz \cite{soroosh_mass_transport}. Otherwise, the expected cost given by a probability distribution can grow faster than the Wassertein norm between this distribution and the center of the ball, which can lead to an infinite risk. To avoid this situation, we introduce the following assumption.

The sets $\bb V$ and $\bb W$ are infinite-dimensional. In order to define them in a meaningful way, we introduce the following definition and assumption.
\begin{definition}\label{def_wasser}
The Wasserstein metric $W_1$ based on the $\ell_\infty$ norm is defined as
\begin{align*}
    W_1(\bb P_1, \bb P_2) = \inf_{\Pi} \int_{\Xi^2} \|\xi_1 - \xi_2\|_\infty \Pi(d\xi_1, d\xi_2),
\end{align*}
where $\Xi$ is the support of $\bb P_1$ and $\bb P_2$ and $\Pi$ is a joint distribution of $\xi_1$ and $\xi_2$ with marginal distributions $\bb P_1$ and $\bb P_2$, respectively.
\end{definition}
\begin{assumption}\label{ass_proba}
The sets $\bb V$ and $\bb W$ are Wasserstein-1 balls $\bb B_{\varepsilon_v}(\tilde{\bb P}_v)$ and $\bb B_{\varepsilon_w}(\tilde{\bb P}_w)$ given by
\begin{subequations}%\label{eq_balls_def}
\begin{align*}
    \bb V &= \bb B_{\varepsilon_v}(\tilde{\bb P}_v) = \{\bb P_v | W_1(\bb P_v, \tilde{\bb P}_v) \leq \varepsilon_v\},
    \\
    \bb W &= \bb B_{\varepsilon_w}(\tilde{\bb P}_w) = \{\bb P_w | W_1(\bb P_w, \tilde{\bb P}_w) \leq \varepsilon_w\}.
\end{align*}
\end{subequations}
The support $\Xi$ of the Wasserstein metric $W_1$ is the entire space $\bb R^{\bs p}$ or $\bb R^{\bs n}$ for $\bb P_v$ and $\bb P_w$, respectively.
\end{assumption}
%\vspace{-20pt}
In Definition \ref{def_wasser}, Wasserstein-1 balls\footnote{\textcolor{redtmp}{The first order Wasserstein metric is the most common in the literature because it is one of the easiest to interpret and reformulate in a tractable way \cite{wasserstein_kuhn, soroosh_mass_transport}.}} only require a norm, a center, and a radius to define a set in the infinite-dimensional space of probability distributions. We chose the $\ell_\infty$ norm because it treats each entry of the noise vectors separately\footnote{\textcolor{redtmp}{Other norms could be less conservative but the problem must be relaxed as in \cite[Theorem 2.1]{multivar_relax_dro_chen} to become tractable.}}, making it easier for a user to determine the radii $\varepsilon_v$ and $\varepsilon_w$. These radii are given by the expected amount of noise in the worst sensor and the expected disturbance in the most perturbed state. The center is a distribution, which is a function with an unbounded support, and thus much harder to determine. We do away with this difficulty by centering the balls on empirical distributions.

\begin{remark}
Robustness against worst-case bounded noise is more common and avoids infinite-dimensional problems. However, this often means one must choose between over-conservatism or lack of robustness if rare noise realizations are very large. Distributional robustness allows one to consider unbounded disturbances, while weighting them in accordance with their probability. This approach is therefore better suited to generic disturbance patterns.
\end{remark}

%Assumption \ref{ass_proba} is very standard in the DRO literature \cite{wasserstein_kuhn} It may seem restrictive because it resembles the use of spherical uncertainty sets in robust MHE. However, this analogy does not apply because the balls contain distributions, which can result in realizations that are infinitely far away from the samples $\tilde{\bs v}_i$ and $\tilde{\bs w}_i$. This happens because the the $\ell_1$ Wasserstein norm is a mass transportation problem where mass refers to the area under a probability density curve. By moving a small amount of this mass while keeping the ball radius constant, it is possible to obtain a non-zero probability for realizations that are far away from the empirical distribution located at the center of the ball.

% \begin{figure}[H]
%     \centering
%     \input{}
%     \caption{One-dimensional example of distributions within an $\ell_1$ Wassterstein ball of radius 0.5 around the Dirac delta distribution.}
%     \label{fig_wassterstein}
% \end{figure}

Under the assumptions \ref{ass_proba} and \ref{ass_cost}, the worst-case empirical risk \eqref{eq_risk_def} becomes
\begin{subequations} \label{eq_prob_def}
\begin{align}\label{eq_cost_def}\textcolor{redtmp}{
    \mc R(\bs e(\mc L, \bs v, \bs w)) =\!\!\sup_{\substack{
    \bb P_v \in \bb B_{\varepsilon_v}(\tilde{\bb P}_v)
    \\
    \bb P_w \in \bb B_{\varepsilon_w}(\tilde{\bb P}_w)
    %\\
    %\mc A \mt{ s.t. } \|\mc A - \mc F\|_F \leq \varepsilon_A
    %\\
    %\mc C \mt{ s.t. } \|\mc C - \mc U\|_F \leq \varepsilon_A
    }}
    \!\bb E_{\substack{v \sim \bb P_v \\ w \sim \bb P_w}} 
    \;
    \|\mc Q \bs e(\mc L, \bs v, \bs w)\|_1,}
\end{align}
and an optimal policy can be computed as
\begin{align}\label{eq_prob_def_L}
    \textcolor{redtmp}{\mc L^\star = \underset{\mc L \mt{ causal}}{\mt{arg}\, \mt{inf}} \; \inf_{\bs e} \mc R(\bs e(\mc L, \bs v, \bs w)) \quad\quad \mt{s.t. \eqref{eq_error_sys_aug}},}
\end{align}
\end{subequations}
\textcolor{redtmp}{which is coupled to \eqref{eq_cost_def} through the constraint \eqref{eq_error_sys_aug}, and} where the constraint "$\mc L \mt{ causal}$" enforces the sparsity pattern given by the zero blocks in \eqref{eq_L_def}. %Note that, to obtain a compact notation, we include the setting of the first block-column and -row to zero in the causality constraint.

% We include the first block-column and -row in the notion of causality even if its

% TODO: In theory just last p0 are zero, but first block-column and -row are there for previous error propagation (consistency with (2))

%However, since the first block of both $\mc C$ and $\bs v$ are zero, the risk $\mc R(\bs e)$ is equal for any values of the first $n$ columns of $\mc L$. We can therefore relax the constraint and define a causal $\mc L$ as one where the $\bs p_0 = p(T_f-1)$ are zero.

% find $\mc L$ minimizing:
% \begin{align}\label{eq_prob_def}
%     &\argmin_{\mc L \mt{ causal}} \mc R(\bs e(\bs v, \bs w, \mc L))
%     \\ \label{eq_prob_dyn_def}
%     &\mt{s.t.}\; \bs e = \mc Z \mc A \bs e - \mc L \mc C \mc Z \bs e + \mc L \bs v + \bs w,
% \end{align}
% where $\mc R(\bs e, \varepsilon)$ is the empirical risk. 

% \vspace{5cm}

\section{Tractable reformulation}
\label{section_tract}
At first glance, the problem \eqref{eq_prob_def} seems very challenging to solve. It is a non-convex, infinite-dimensional, and inf-sup problem. In this section, we first address the non-convexity, and then provide a closed-form solution for the risk $\mc R(\bs e)$. In the end, the problem \eqref{eq_prob_def} is reduced to a simple LP.

\subsection{Convexification}
\label{section_sls}
The first challenge is addressed by proposing a convex reformulation using a System Level Synthesis (SLS) representation of the estimation problem \cite{sls_og}\textcolor{redtmp}{, which decouples \eqref{eq_cost_def} and \eqref{eq_prob_def_L}}. To do so, we \textcolor{redtmp}{use} the disturbance-to-error and noise-to-error maps $\Phi_w \in \bb R^{\bs n \times \bs n}$ and $\Phi_v \in \bb R^{\bs n \times \bs p}$ \textcolor{redtmp}{defined in} \cite{ifac_paper}, i.e.,
\begin{subequations}%\label{eq_def_phi}
\begin{align*}%\label{eq_def_phie}
    \Phi_w &= (I - \mc Z(\mc A - \mc L \mc C))^{-1}, \\
    \Phi_v &= (I - \mc Z(\mc A - \mc L \mc C))^{-1} \mc L.
    %\label{eq_def_phil}
\end{align*}
\end{subequations}
The estimation error is given by $\bs e = \Phi_v \bs v + \Phi_w \bs w$ and the risk \eqref{eq_cost_def} can be rewritten as a function of $\Phi_v$ and $\Phi_w$ as

\begin{align}\label{eq_cost_def_sls}
    \!\!\mc R(\Phi_v, \Phi_w) 
    \!=\!\!
    \sup_{\substack{
    \bb P_v \in \bb B_{\varepsilon_v}(\tilde{\bb P}_v)
    \\
    \bb P_w \in \bb B_{\varepsilon_w}(\tilde{\bb P}_w)}}
    \!\!\!
    \bb E_{\substack{v \sim \bb P_v \\ w \sim \bb P_w}} 
    \|\mc Q(\Phi_{v,i} \bs v + \Phi_{w,i} \bs w)\|_1.\!\!
\end{align}
Using \eqref{eq_cost_def_sls}, the problem \eqref{eq_prob_def} becomes
\begin{align}\label{eq_prob_def_sls}
    &\underset{\substack{\Phi_v \mt{ causal} \\ \Phi_w \mt{ causal}}}{{\mt{arg}\, \textcolor{redtmp}{\mt{inf}}}} \mc R(\Phi_v, \Phi_w)
    \\ \label{eq_prob_achiev_def}
    &\mt{s.t. } [\Phi_v, \Phi_w] \lt[\matb \mc C \mc Z \\ I - \mc Z \mc A \mate \rt] = I.
\end{align}
According to \cite{sls_og}, \eqref{eq_prob_def_sls} is convex in $\Phi_v$ and $\Phi_w$ if and only if $\mc R$ is convex. Moreover, because \eqref{eq_prob_achiev_def} must be satisfied, there exist a $\mc L = \Phi_w^{-1}\Phi_v$ solving \eqref{eq_prob_def}\footnote{\textcolor{redtmp}{By contradiction, assume that $\Phi_w^{-1} = (I-\mc Z(\mc A - \mc L \mc C))$ is not invertible. Under Assumption 3, the Wasserstein ball $\bb B_{\varepsilon_w}$ is supported by $\bb R^{\bs n}$. Hence, $\bb V$ and $\bb W$ always contain realizations of the noises satisfying $\bs{w} + \mc L \boldsymbol{v} \notin \text{span}(I-\mc Z(\mc A - \mc L \mc C)) \subset \bb R^{\bs n}$, which would invalidate the dynamics (4). This contradiction proves that $\Phi_w$ must have maximal rank.}}. The causality constraints here need to result in a causal policy $\mc L$ (see Section \ref{subsec_mats_def}). On the one hand, the noise-to-error map $\Phi_v$ must have the same zero columns as $\mc L$ \textcolor{redtmp}{given in \eqref{eq_L_def}}, because this sparsity will be conserved in the multiplication with $\Phi^{-1}$. \textcolor{redtmp}{On the other hand, $\Phi_w$ has the following structure.
\vspace{-3pt}
\begin{align*}%\label{eq_Phi_w_sparsity}
    \!\!\Phi_w = \!\!\lt[\matb
    0_{n \times n}\! & 0_{n \times n(T_s+1)} & \!0_{n \times n(T_f-1)}
    \\
    0_{n(T_s+1) \times n}\! & \Phi_{w,11} & \!0_{n(T_s+1) \times n(T_f-1)}
    \\
    0_{n(T_f-1) \times n}\! & \Phi_{w,21} & \!\Phi_{w,22}
    \mate\rt]\!\!, \!\!
\end{align*}
where $\Phi_{w,22}$ is lower-triangular to capture that future disturbances affect the prediction error in a causal way.} In the sequel, we will only write "$\Phi_v \mt{ causal}$" and "$\Phi_w \mt{ causal}$" to refer to these sparsity patterns.

In the next section, we explain how to handle the challenge that \eqref{eq_prob_def} includes an infinite-dimensional inf-sup problem.

%First, we want to reformulate \eqref{eq_prob_def_sls} to be able to apply it in real time. Second, we want to study how to guarantee asymptotic stability when the observer $\mc L$ is recomputed and applied at each $t$. Third, we will present an alternative imitation learning cost to reduce the conservativeness. Fourth and finally, we will discuss how to apply the same method for control (MPC).

\subsection{Risk closed-form solution}
\label{section_optim}

One of the most impactful results of DRO is its equivalence with regularization in regression problems \cite{wasserstein_kuhn}. The SLS reformulation allows us to use the DRO theory directly and express the risk in closed form.

\begin{theorem}\label{thm_regu}
The worst-case empirical risk \eqref{eq_cost_def_sls} can be written in closed form as %the sum of the empirical risk evaluated at past measurements of the noise and a regularization term. In mathematical terms, this gives
\begin{align}\label{eq_thm_prob}
    \mc R(\Phi_v, \Phi_w) = \lt\| \mc Q [\Phi_v, \Phi_w] \lt[\matb \underline{\tilde{\bs v}} \\ \underline{\tilde{\bs w}} \mate\rt] \rt\|_{F_1} + \|\mc Q[\varepsilon_v \Phi_v, \varepsilon_w \Phi_w]\|_{F_1},
\end{align}
where the $N$ empirical measurements are stacked as
\begin{subequations}%\label{eq_samples_def}
\begin{align*}
    [\tilde{\bs v}_1, \dots, \tilde{\bs v}_N] &= \underline{\tilde{\bs v}} \in \bb R^{\bs p \times N},
    \\
    [\tilde{\bs w}_1, \dots, \tilde{\bs w}_N] &= \underline{\tilde{\bs w}} \in \bb R^{\bs n \times N},
\end{align*}
\end{subequations}
\end{theorem}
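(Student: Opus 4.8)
The plan is to turn the infinite-dimensional inner supremum in \eqref{eq_cost_def_sls} into a finite regularized quantity by invoking Wasserstein-1 strong duality, exploiting the fact that after the SLS change of variables the error is the \emph{linear} map $\bs e = \Phi_v\bs v + \Phi_w\bs w$ and the loss $\|\mc Q\bs e\|_1$ is convex and globally Lipschitz. First I would write the cost as a sum over the $\bs n$ output components, $\|\mc Q(\Phi_v\bs v+\Phi_w\bs w)\|_1 = \sum_{i}\bigl|(\mc Q\Phi_v)_i\bs v + (\mc Q\Phi_w)_i\bs w\bigr|$, so that each summand is an affine function passed through an absolute value, i.e.\ the prototypical scalar loss for which the reformulation of \cite{wasserstein_kuhn} is tight.

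Next I would handle the two ambiguity sets. Because the samples are collected in pairs $(\tilde{\bs v}_k,\tilde{\bs w}_k)$, I would view $\xi=[\bs v^\top,\bs w^\top]^\top$ as a single random vector with empirical law $\frac1N\sum_k\delta_{(\tilde{\bs v}_k,\tilde{\bs w}_k)}$ and equip the joint support $\bb R^{\bs p+\bs n}$ with a block-weighted $\ell_\infty$ ground metric whose $v$- and $w$-blocks carry budgets $\varepsilon_v$ and $\varepsilon_w$; this recovers the balls $\bb B_{\varepsilon_v}(\tilde{\bb P}_v)$ and $\bb B_{\varepsilon_w}(\tilde{\bb P}_w)$ of Assumption \ref{ass_proba} while letting me treat the supremum as a single Wasserstein-1 problem. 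Applying the standard duality (e.g.\ \cite{wasserstein_kuhn, soroosh_mass_transport}) to each scalar summand gives, for an affine-absolute-value loss, the empirical mean plus the radius times the dual norm of the gradient; since the ground metric is $\ell_\infty$, the dual norm is $\ell_1$, and the Lipschitz constant of the $i$-th summand in $\bs v$ (resp.\ $\bs w$) is exactly $\|(\mc Q\Phi_v)_i\|_1$ (resp.\ $\|(\mc Q\Phi_w)_i\|_1$). Summing the $\bs n$ resulting identities reproduces the claimed formula: the empirical terms accumulate into $\|\mc Q[\Phi_v,\Phi_w][\underline{\tilde{\bs v}};\underline{\tilde{\bs w}}]\|_{F_1}$ and the regularizers into $\varepsilon_v\|\mc Q\Phi_v\|_{F_1}+\varepsilon_w\|\mc Q\Phi_w\|_{F_1}=\|\mc Q[\varepsilon_v\Phi_v,\varepsilon_w\Phi_w]\|_{F_1}$.

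The previous paragraph only delivers the \emph{upper} bound coming from the relaxation of \cite{multivar_relax_dro_chen}, obtained by granting each output component its own worst-case measure; the genuine difficulty, and the whole point of the theorem, is to show this bound is attained by a single pair $(\bb P_v^\star,\bb P_w^\star)$, i.e.\ that the relaxation is \emph{exact}. For this I would construct an explicit worst-case transport plan: for each sample $k$ I would displace the atom in a direction that, entrywise, agrees with the sign pattern of the corresponding rows of $\mc Q\Phi_v$ and $\mc Q\Phi_w$, using the fact that the $\ell_\infty$ ground metric lets every coordinate of $\bs v$ and $\bs w$ move simultaneously within the same budget. The main obstacle is precisely the possible sign cancellation across the $\bs n$ absolute values: achieving $\sum_i\|(\mc Q\Phi_v)_i\|_1$ rather than the smaller $\ell_\infty\!\to\!\ell_1$ operator norm of $\mc Q\Phi_v$ requires the per-row steepest-ascent directions to be realizable by one admissible plan. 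I expect the $\ell_1$ cost paired with the $\ell_\infty$ Wasserstein metric to be exactly the structure that removes this obstruction — this is what distinguishes the scalar- and vector-parameter results of \cite{wasserstein_kuhn, soroosh_mass_transport} from the matrix-valued case treated here — and I would discharge it by checking complementary slackness between the candidate plan and the dual multipliers $\lambda_v=\varepsilon_v$, $\lambda_w=\varepsilon_w$ of the duality invoked above, so that the upper and lower bounds coincide.
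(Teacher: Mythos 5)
Your first half is fine, but it only re-derives the row-by-row relaxation of \cite{multivar_relax_dro_chen}; as you yourself note, the entire content of the theorem beyond that relaxation is the exactness step, and this is exactly the step your proposal leaves as a hope (``I expect \dots I would discharge it by checking complementary slackness'') rather than a proof. That is a genuine gap, and it cannot be filled: the sign-cancellation obstruction you flag is real and, in general, fatal. Under your own (correct) joint reading of the ambiguity set, \cite[Theorem 10]{wasserstein_kuhn} gives the worst-case risk as the paired empirical cost plus the Lipschitz modulus of $\xi \mapsto \|\mc Q[\varepsilon_v\Phi_v,\varepsilon_w\Phi_w]\,\xi\|_1$ with respect to the $\ell_\infty$ ground metric, and that modulus is the $\ell_\infty\!\to\!\ell_1$ operator norm $\sup_{\|d\|_\infty\le 1}\|\mc Q[\varepsilon_v\Phi_v,\varepsilon_w\Phi_w]\,d\|_1$, not the Frobenius-type norm $\|\mc Q[\varepsilon_v\Phi_v,\varepsilon_w\Phi_w]\|_{F_1}$. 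The two coincide only when a single displacement direction sign-aligns with every row simultaneously. Concretely, if $\mc Q\,\Phi_v$ contains the rows $(1,\,1)$ and $(1,\,-1)$ (nothing in the causality or achievability constraints forbids this), the corresponding part of the loss is $|v_1+v_2|+|v_1-v_2| = 2\|\bs v\|_\infty$, which is $2$-Lipschitz in the $\ell_\infty$ metric of Definition \ref{def_wasser}; by Kantorovich--Rubinstein duality no admissible distribution can raise its expectation by more than $2\varepsilon_v$, whereas the regularizer in \eqref{eq_thm_prob} charges $4\varepsilon_v$ for these rows. Hence no transport plan and no complementary-slackness certificate can make your upper bound tight: what your argument actually proves is only the inequality ``$\le$'' in \eqref{eq_thm_prob}.

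It is worth comparing this with the paper's own proof, which takes a different route: it rescales $\bs w$ by $\kappa=\varepsilon_v/\varepsilon_w$, applies \cite[Theorem 10]{wasserstein_kuhn} once to the stacked variable, and then computes the variability term $\sup\{\|z\|_1 : \ell^\star(z)<+\infty\}$ in \eqref{eq_proof_thm_regu_3} for $\ell(x)=\|Bx\|_1$, $B=\mc Q[\Phi_v,\kappa^{-1}\Phi_w]$. Since $\ell$ is the support function of the polytope $C=B^\top\{u:\|u\|_\infty\le1\}$, one has $\mt{dom}\,\ell^\star=C$ and therefore $\sup_{z\in C}\|z\|_1=\|B\|_{\infty\to 1}$ --- again the operator norm. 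The paper instead evaluates $\ell^\star$ coordinatewise, which amounts to replacing $\|\sum_i x_i B^{(i)}\|_1$ by $\sum_i |x_i|\,\|B^{(i)}\|_1$ (with $B^{(i)}$ the columns of $B$); this enlarges $C$ to its bounding box and returns $\|B\|_{F_1}$. That is precisely your simultaneous-alignment assumption in dual clothing, and the $2\times 2$ example above falsifies it there as well. So your diagnosis of where the crux lies is exactly right, and it applies verbatim to the paper's argument; but as a proof of the stated equality your proposal is incomplete, and the missing step cannot be supplied without additional structural assumptions on $\mc Q[\Phi_v,\Phi_w]$ (e.g.\ a single row, or rows whose sign patterns admit a common aligning direction), under which exactness of the relaxation would indeed hold.
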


\begin{proof}
Let $\kappa = \frac{\varepsilon_v}{\varepsilon_w}$ be the ratio between the Wasserstein balls radii, and let $\tilde{\bb P}_w'$ be the rescaled $\tilde{\bb P}_w$ distribution defined by
\vspace{-2pt}
\begin{align*}%\label{eq_proof_thm_regu_1}
    \tilde{\bb P}_w'(\kappa w) = \frac{1}{N}\sum_{i=1}^N \delta(\kappa(\bs w - \tilde{\bs w}_i)).
\end{align*}
Hence, with $\bs w' = \kappa \bs w$, we have
\begin{align*}%\label{eq_proof_thm_regu_2}
    \mc R(\Phi_v, \Phi_w) 
    =\!
    \sup_{\substack{
    \bb P_v \in \bb B_{\varepsilon_v}(\tilde{\bb P}_v)
    \\
    \bb P_w \in \bb B_{\varepsilon_v}(\tilde{\bb P}_w')}}
    \!
    \bb E_{\substack{v \sim \bb P_v \\ w' \sim \bb P_w}} 
    \!
    \|\mc Q(\Phi_v \bs v + \Phi_w \kappa^{-1} \bs w')\|_1.
\end{align*}

Theorem 10 in \cite{wasserstein_kuhn} states that with a Lipschitz cost $\ell(z) = \|\mc Q [\Phi_v, \kappa^{-1}\Phi_w] z\|_1$ and an unbounded support $\Xi$ in Definition \ref{def_wasser}, we have
\begin{align}
    \mc R(\Phi_v, \Phi_w) &=
    \bb E_{\substack{v \sim \tilde{\bb P}_v \\ w' \sim \tilde{\bb P}_w'}} 
    \|\mc Q(\Phi_v \bs v + \kappa^{-1} \Phi_w \bs w')\|_1
    \nonumber \\ \label{eq_proof_thm_regu_3}
    &\quad\quad + \varepsilon_v \sup_{z |\ell^\star(z) < +\infty} \|z\|_\star ,
\end{align}
where $\ell^\star$ is the convex conjugate function of $\ell$ and $\|\cdot\|_\star = \|\cdot\|_1$ is the dual to the $\ell_\infty$ norm used in Definition \ref{def_wasser}. The function $\ell^\star$ is given by
\begin{align*}%\label{eq_proof_thm_regu_4}
\ell^\star(z) &= \sup_{x \in \R^{\bs n + \bs p}} z^\top x - \ell(x),
\\ \nonumber
&= \sum_i^{\bs n + \bs p} \sup_{x_i} z_i x_i - |x_i| \lt\|\lt([\Phi_v, \kappa^{-1}\Phi_w]^\top \mc Q^\top\rt)_i\rt\|_1.
\end{align*}
Each of the supremums is either zero or infinite, depending on which of the two terms is larger in absolute value. This means that to obtain $\ell^\star(z) < +\infty$, each $|z_i|$ must not be greater than $\lt\|\lt([\Phi_v, \kappa^{-1}\Phi_w]^\top \mc Q^\top\rt)_i\rt\|_1$. Hence, the supremum in \eqref{eq_proof_thm_regu_3} is given by $\|\mc Q[\Phi_v, \kappa^{-1}\Phi_w]\|_{F1}$. To conclude the proof, we substitute this closed-form solution in \eqref{eq_proof_thm_regu_3} and compute explicitly the empirical expectation to obtain \eqref{eq_thm_prob}.
\end{proof}

Theorem \ref{thm_regu} gives a closed-form solution for the worst-case empirical risk $\mc R$. This removes the inner supremum in \eqref{eq_prob_def_sls}. \textcolor{redtmp}{Moreover, the resulting regularized cost is convex, which means that the infimum (11) is equal to a unique, global, and achievable minimum of the risk \eqref{eq_thm_prob}.}
%Moreover, the resulting regularized cost is convex, which means that \eqref{eq_prob_def_sls} is convex and has only one minimum that is global. 

\begin{corollary}\label{cor_split}
If $\mc Q$ is diagonal, then the problem \eqref{eq_prob_def_sls} can be split into $n$ separate optimization problems. The final solution of the full problem is given by
\begin{align}\label{eq_thm_ref}
    \Phi_{v} &= \mc Q^{-1} \lt[ \matb 0, \dots, 0, & 0, \dots, 0 
    \\
    (\argmin_{\phi} \|\Psi \phi - \mu_2\|_1)^\top, & 0, \dots, 0
    \\
    \vdots & \vdots
    \\ \vspace{-22pt} \\
    (\argmin_{\phi} \|\Psi \phi - \mu_{\bs n}\|_1)^\top, 
    & \rule{0pt}{20pt}\smash{\underbrace{0, \dots, 0}_{\bs p_0 \textnormal{ times}}} \mate \rt] \!,
    \\ \nonumber
    \\ \label{eq_thm_ref_proof_1}
    \Phi_w &= (I - \mc Z \mc A)^{-1} - \Phi_v \mc C \mc Z (I - \mc Z \mc A)^{-1},
\end{align}
where $\bs p_0 = p(T_f-1)$,
\begin{align*}%\label{eq_thm_ref_mu}
    \mu_i &= \lt(\lt[\mc Q (I - \mc Z \mc A)^{-1}\rt]_{\!i}\rt [ 0 \cdot \mc C^\top\!, \varepsilon_w I, \underline{\tilde{\bs w}}])^{\!\top} \quad \forall i = 2, \dots, \bs n,
\end{align*}
and $\Psi$ is the matrix formed by the $\bs p - \bs p_0$ first columns of
\begin{align*}%\label{eq_thm_ref_psi}
    \Psi_{nc} &= \lt[\matb - \varepsilon_v I \\ \varepsilon_w (I - \mc Z \mc A)^{-\top} \mc Z^\top \mc C^\top \\ \underline{\tilde{\bs w}}^\top (I - \mc Z \mc A)^{-\top} \mc Z^\top \mc C^\top - \underline{\tilde{\bs v}}^\top \mate\rt].
\end{align*}
\end{corollary}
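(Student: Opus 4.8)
The plan is to exploit two structural facts about the closed form in Theorem \ref{thm_regu}: the norm $\|\cdot\|_{F_1}$ is by definition the sum of the $\ell_1$ norms of the rows, and a diagonal $\mc Q$ rescales each row of $[\Phi_v,\Phi_w]$ independently. Consequently, the $i$-th row contribution to the objective \eqref{eq_thm_prob} depends only on the $i$-th rows $\Phi_{v,i}$ and $\Phi_{w,i}$. The same decoupling holds for the achievability constraint \eqref{eq_prob_achiev_def}, which row-wise reads $\Phi_{v,i}\,\mc C\mc Z + \Phi_{w,i}\,(I-\mc Z\mc A) = I_i^\top$ and couples only the $i$-th rows. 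Hence the full problem \eqref{eq_prob_def_sls} separates into independent per-row programs (one per row of $[\Phi_v,\Phi_w]$), and it remains to bring each of them into the scalar $\ell_1$-regression form claimed in \eqref{eq_thm_ref}.

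For a fixed row $i$, I would first use the equality constraint to eliminate $\Phi_{w,i}$. Since $\mc Z\mc A$ is strictly block-lower-triangular, $I-\mc Z\mc A$ is invertible, and solving the row constraint gives $\Phi_{w,i} = [(I-\mc Z\mc A)^{-1}]_i - \Phi_{v,i}\,\mc C\mc Z(I-\mc Z\mc A)^{-1}$, i.e. exactly the $i$-th row of \eqref{eq_thm_ref_proof_1}. Substituting this back renders the row objective a function of $\Phi_{v,i}$ alone, consisting of three pieces: the regularizer $[\mc Q]_{ii}\varepsilon_v\|\Phi_{v,i}\|_1$, the regularizer $[\mc Q]_{ii}\varepsilon_w\|\Phi_{w,i}\|_1$ after the substitution, and the empirical data-fit term $[\mc Q]_{ii}\|\Phi_{v,i}\underline{\tilde{\bs v}} + \Phi_{w,i}\underline{\tilde{\bs w}}\|_1$.

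The next step is to stack these three pieces into a single $\ell_1$ norm. Choosing the optimization variable $\phi = (\,[\mc Q]_{ii}\Phi_{v,i}\,)^\top$ absorbs the diagonal factor, which is the origin of the $\mc Q^{-1}$ prefactor in \eqref{eq_thm_ref}. Transposing inside each norm and using that $\|\cdot\|_1$ is invariant under negation and transposition, the three pieces become, respectively, the $-\varepsilon_v I$ block, the $\varepsilon_w(I-\mc Z\mc A)^{-\top}\mc Z^\top\mc C^\top$ block, and the $\underline{\tilde{\bs w}}^\top(I-\mc Z\mc A)^{-\top}\mc Z^\top\mc C^\top - \underline{\tilde{\bs v}}^\top$ block of $\Psi_{nc}\phi$, each offset by the matching block of $\mu_i$ assembled from the row $[\mc Q(I-\mc Z\mc A)^{-1}]_i$. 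This identifies the row objective with $\|\Psi_{nc}\phi-\mu_i\|_1$. Finally, causality forces the last $\bs p_0 = p(T_f-1)$ entries of $\Phi_{v,i}$ to vanish, so the corresponding columns of $\Psi_{nc}$ are dropped to obtain $\Psi$; the minimizer $\phi$ then yields $\Phi_{v,i}$, and \eqref{eq_thm_ref_proof_1} recovers $\Phi_w$.

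I expect the main obstacle to be the careful bookkeeping that matches signs, transposes, and the placement of the $[\mc Q]_{ii}$ factor across the three blocks, so that $\|\Psi_{nc}\phi-\mu_i\|_1$ coincides term-by-term with the substituted row objective. Two smaller points also warrant attention: the rows associated with the initial-condition block give trivial (zero) subproblems, matching the zero first rows in \eqref{eq_thm_ref}, and one must verify that the $\Phi_w$ produced by \eqref{eq_thm_ref_proof_1} automatically inherits the causal sparsity pattern described in Section \ref{section_sls}. Achievability of each per-row minimum follows from the convexity noted after Theorem \ref{thm_regu}.
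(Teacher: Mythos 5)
Your proof follows essentially the same route as the paper's: eliminate $\Phi_w$ through the affine constraint \eqref{eq_prob_achiev_def} (the paper does this globally before separating, you do it row by row after separating, which is immaterial), exploit the diagonality of $\mc Q$ and the row-separability of $\|\cdot\|_{F_1}$ to decouple \eqref{eq_prob_def_sls} into $\bs n$ independent row programs, absorb $q_i$ into the variable $\phi$, and stack the two regularizers and the empirical data-fit term into a single $\ell_1$ residual $\|\Psi_{nc}\phi - \mu_i\|_1$ by transposition. All of this matches the paper's argument step for step.

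The one genuine gap is your treatment of causality. The causal sparsity pattern that $\Phi_v$ inherits from $\mc L$ in \eqref{eq_L_def} has \emph{two} parts: the last $\bs p_0 = p(T_f-1)$ columns must vanish, but so must the \emph{first} $p$ columns (the zero first block-column of $\mc L$). You enforce only the former by dropping the trailing columns of $\Psi_{nc}$; since $\Psi$ retains the first $p$ columns, nothing in your argument prevents the minimizer $\phi$ from having nonzero entries there, which would produce a $\Phi_v$ violating the required sparsity and hence not corresponding to an implementable $\mc L$. The paper closes exactly this hole with a structural observation: the first block row of $\mc C \mc Z$ is zero, so the first $p$ columns of $(I - \mc Z \mc A)^{-\top}\mc Z^\top \mc C^\top$ vanish, and the first $p$ rows of $\underline{\tilde{\bs v}}$ are zero by construction of $\bs v$; consequently the first $p$ entries of $\phi$ enter the objective only through the $-\varepsilon_v I$ block, and any optimizer sets them to zero automatically, so those columns need not be removed. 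Without this observation your proof does not establish that the LP in \eqref{eq_thm_ref} returns a causal $\Phi_v$. Relatedly, the two "smaller points" you flag are genuinely needed but left unresolved: the causality of $\Phi_w$ follows in one line from \eqref{eq_thm_ref_proof_1} because $(I - \mc Z \mc A)^{-1}$ and $\mc C \mc Z$ are block lower-triangular, so you should state that argument rather than defer it.
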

\begin{proof}
First, we note that the constraint \eqref{eq_prob_achiev_def} is equivalent to \eqref{eq_thm_ref_proof_1}. Moreover, if $\Phi_v$ is causal, then so is $\Phi_w$ because both $(I - \mc Z \mc A)$ and $\mc C \mc Z$ are block lower-triangular. Plugging \eqref{eq_thm_ref_proof_1} into \eqref{eq_prob_def_sls} yields
\begin{align*}%\label{eq_thm_ref_proof_2}
    \argmin_{\Phi_v \mt{ causal}} &\lt\| \mc Q [\Phi_v, I_{\mc Z \mc A} - \Phi_v \mc C \mc Z I_{\mc Z \mc A}] \lt[\matb \underline{\tilde{\bs v}} \\ \underline{\tilde{\bs w}} \mate\rt] \rt\|_{F_1} 
    \\ \nonumber
    &+ \|\mc Q[\varepsilon_v \Phi_v, \varepsilon_w I_{\mc Z \mc A} - \varepsilon_w \Phi_v \mc C \mc Z I_{\mc Z \mc A}]\|_{F_1},
\end{align*}
where $I_{\mc Z \mc A} = (I - \mc Z \mc A)^{-1}$.
Rearranging the terms and inverting the sign inside the norm gives
\begin{align*}%\label{eq_thm_ref_proof_3}
    &\argmin_{\Phi_v \mt{ causal}} \lt\| \mc Q \Phi_v (\mc C \mc Z I_{\mc Z \mc A} \underline{\tilde{\bs w}} - \underline{\tilde{\bs v}}) - \mc Q I_{\mc Z \mc A} \underline{\tilde{\bs w}} \rt\|_{F_1} 
    \\ \nonumber
    &\quad + \|\mc Q \Phi_v [-\varepsilon_vI, \varepsilon_w\mc C \mc Z I_{\mc Z \mc A}] - \mc Q [0 \cdot \mc C^\top, \varepsilon_w I_{\mc Z \mc A}]\|_{F_1},
\end{align*}
where $0 \cdot \mc C^\top$ is used to construct a zero matrix of the same shape as $\Phi_v$.
By replacing the sum of norms by the norm of an augmented matrix, we obtain
\vspace{-3pt}
\begin{align}\label{eq_thm_ref_proof_4}
    \argmin_{\Phi_v \mt{ causal}} \| \mc Q &\; \Phi_v[-\varepsilon_v I, \varepsilon_w \mc C \mc Z I_{\mc Z \mc A}, \mc C \mc Z I_{\mc Z \mc A} \underline{\tilde{\bs w}} - \underline{\tilde{\bs v}}]
    \\ \nonumber
    &- \mc Q I_{\mc Z \mc A} [0 \cdot \mc C^\top, \varepsilon_w I, \underline{\tilde{\bs w}}] \|_{F_1}.
\end{align}
Let $\mc Q = \mt{diag}([q_1, \dots, q_{\bs n}])$, for $i = 1, \dots, \bs n$ the $i^{th}$ term of the Frobenius norm in \eqref{eq_thm_ref_proof_4} is written as
\begin{align}\label{eq_thm_ref_proof_5}
    \!\lt\| \! \lt[\matb -\varepsilon_v I \\ \varepsilon_w I_{\mc Z \mc A}^\top \mc Z^\top \mc C^\top \\ \underline{\tilde{\bs w}}^{\!\top} I_{\mc Z \mc A}^\top \mc Z^\top \mc C^\top \!\!-\! \underline{\tilde{\bs v}}^{\!\top} \! \mate\rt]
    \!\!
    (q_i \Phi_{v,i})^\top
    \!-\!
    \! \lt[\matb 0 \cdot \mc C \\ \varepsilon_w I \\ \underline{\tilde{\bs w}}^{\!\top} \mate\rt] \!\! (\mc Q I_{\mc Z \mc A})_i^\top \rt\|_{1} \!\!.\!
\end{align}
Note that each term only depends on the corresponding row of $\Phi_v$. Hence, the minimization problem can be separated into $\bs n$ independent sub-problems. Finally, one can solve for $\phi = q_i \Phi_{v,i}$ and plugging the constraint that $\Phi_v$ is causal simply removes the columns corresponding to the desired zeros in the matrix that pre-multiplies $(q_i \Phi_{v,i})^\top$ in \eqref{eq_thm_ref_proof_5}. Due to the sparisty pattern of $\mc Z^\top$ in \eqref{eq_thm_ref_proof_5}, the first $p$ entries of the optimizer $\phi$ are zero for all $i = 1, \dots, \bs n$, so we do not need to remove the first $p$ columns of $\Phi_v$.
\end{proof}

Theorem \ref{thm_regu} and Corollary \ref{cor_split} are the main results of this paper, as they allows to write \eqref{eq_prob_def_sls} as several small LPs. This allows one to first compute $\Phi_v$ row by row, and then obtain the observer policy $\mc L$ using
\begin{align*}%\label{eq_L_final}
    \mc L = \Phi_w^{-1} \Phi_v = (I - \mc Z \mc A)(I - \Phi_v \mc C \mc Z)^{-1} \Phi_v,
\end{align*}
which is implemented using \eqref{eq_observer_apply}.

\section{Numerical results}\label{section_results}
To highlight the ability of our method to handle time-varying and even nonlinear systems, we will perform experiments on a Van der Pol oscillator under complex disturbance patterns. The dynamics are given by
\vspace{-2pt}
\begin{align*}
    \dot x(t) &= \lt[x_2(t), \lt(1 - x_1(t)^2\rt)x_2(t) - x_1(t)\rt]^\top + w(t),
    \nonumber \\
    y &= x_1(t) + v(t).
\end{align*}
Note that this system is both continuous and non linear, so we cannot use \eqref{eq_thm_ref} directly. Hence, we discretize the system with a sampling frequency of 10Hz using the forward Euler method, and linearize it at each point of its trajectory (see Remark \ref{rem_nonlin} in Section \ref{subsec_dyn}) resulting in an LTV system. The time horizon considered is 1s (or 10 samples) and we are interested in the one step ahead prediction (i.e., $T_s = 8$ and $T_f = 1$). \textcolor{redtmp}{Additional details can be found in \cite{python_package}.}

\subsection{Noise profiles}
\vspace{-2pt}
We consider two different cases. First, we apply a noise profile following a sinusoidal pattern plus uniformly distributed noise of the same amplitude as the sine wave (see \cref{fig:sine_profile}). Second, we apply noise following a bi-modal noise distribution based on a mixture of two Gaussians. The precise distributions are:
\vspace{-4pt}
\begin{align}\label{eq_sin_noise}
    \!\!\![w_{t,s}^\top, v_{t,s}] \sim&\; \mc U(sin(10t)[0.1, 0.1, -0.1], 0.1),\!\!\!
    \\
    \!\!\![w_{t,b}^\top, v_{t,b}] \sim&\; 0.25\mc N(0.05[1, 1, -1], \mt{diag}(0.025[1,1,2]))\!\!\!
    \label{eq_bim_noise}
    \\ \nonumber
    &\!+ 0.75\mc N(0.05[-1, -1, 2], \mt{diag}(0.025[1, 1, 2])),
\end{align}
\vspace{-16pt} 

\noindent
where $\mc U(\mu, \sigma)$ is the uniform density supported by $[\mu-\sigma, \mu+\sigma]$ and $\mc N(\mu, \Sigma)$ is the Gaussian density with mean $\mu$ and variance $\Sigma$. In the sequel, we refer to \eqref{eq_sin_noise} and \eqref{eq_bim_noise} as the sine and bimodal noises, respectively. We generate 70 realizations at each time step and (i) use 20 of them as training data (shown in Fig. \ref{fig:noise_profiles}) to build $\underline{\tilde{\bs v}}$ and $\underline{\tilde{\bs w}}$ and (ii) the other 50 to validate the methods.
\vspace{-8pt} 
\begin{figure}[H]
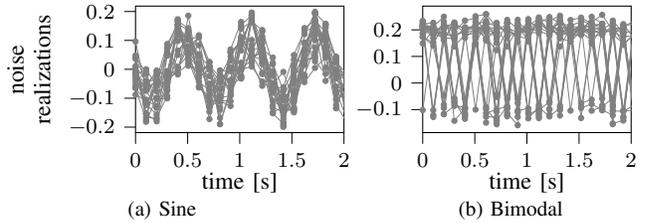

    %\centering
    \hspace{-10pt}
    \begin{subfigure}[b]{0.24\textwidth}
        \input{plots/sine_profile.tex}
        \vspace{-19pt}
        \subcaption{Sine}
        \vspace{-5pt}
        \label{fig:sine_profile}
    \end{subfigure}
    \hspace{1pt}
    \begin{subfigure}[b]{0.24\textwidth}
        \input{plots/bimodal_profile.tex}
        \vspace{-7pt}
        \subcaption{Bimodal}
        \vspace{-5pt}
        \label{fig:bimodal_profile}
    \end{subfigure}
    \hspace{-20pt}
    \caption{First two seconds of the 20 measurement noise realizations used to build the empirical distribution $\tilde{\bb P}_v$.}
    \label{fig:noise_profiles}
\end{figure}
\vspace{-14pt}
The non-stochastic components in \eqref{eq_sin_noise} and \eqref{eq_bim_noise} are a sine wave and a bias, respectively. %This leaves a uniform noise of magnitude 0.1 for the former, and a bimodal noise with a standard deviation up to 0.05 for the latter, as well as a linearization error for both, which has an $\ell_\infty$ norm measured on the training samples of less than 0.1 in both cases. 
Although the noise is bounded in \eqref{eq_sin_noise}, the linearization error may exceed the bound measured on the training samples. Hence, distributional robustness is well motivated for both types of noises.
\vspace{-4pt}

\vspace{-2pt}
\subsection{Results}\label{subsec_results}
\vspace{-2pt}
This section shows the prediction error (i.e. the last $n$ elements of $\bs e$) given by The EKF \cite{ekf_og}, unconstrained MHE with a quadratic cost \cite{book_mpc}, and distributionally robust MHE \eqref{eq_thm_ref} with  $\mc Q = \varepsilon I$ and $\varepsilon_w = \varepsilon_v = \varepsilon = 0.2$ (corresponding to the upper bound or the 95th percentile of the error), denoted in what follows by DRO\footnote{\textcolor{redtmp}{Solving MHE and DRO at each timestep takes about 0.04s and 0.1s, respectively, using Python on one core of a RaspberryPi.}}. The error is $\|\hat x_{t+1} - x_{t+1}\|_1$, where $\hat x_{t+1}$ is the prediction made by each method and $x_{t+1}$ is the exact state of the oscillator at time $t+1$.

Figures \ref{fig:sin} and \ref{fig:bim} show a significantly better estimation performance provided by distributionally robust MHE compared to classical MHE and the EKF. In particular, one can see in the error plot that the use of the empirical distribution mitigates the oscillations caused by the sine wave (Fig. \ref{fig:sin}) and the drift cause by the swings between the modes of the bimodal distribution (Fig. \ref{fig:bim}). Indeed, one can observe that in both cases, both MHE and EKF generate around 25\% to 60\% more error than our method.
\vspace{-4pt}

\vspace{-2pt}
\subsection{Wasserstein radius}
\vspace{-2pt}
In Section \ref{subsec_results}, we tuned the Wasserstein radius to be approximately equal to the sum of the magnitudes of the linearization error and the stochastic component in the noise. 

\noindent
Fig. \ref{fig:boxplot} also analyzes the performance of the naive data-driven estimator (i.e., when $\varepsilon = 0$) and shows that in both cases, while this approach still outperforms methods relying on the assumption of Gaussian noise, the mean and the variance of the error are much larger than with distributional robustness.% Indeed, the performance varies greatly depending on how close each test sample is to the training set, leading to very uncertain estimates.

\vspace{-5pt}
\begin{figure}[H]
    \begin{subfigure}{0.5\textwidth}
        %\centering
        \hspace{12pt}
        \input{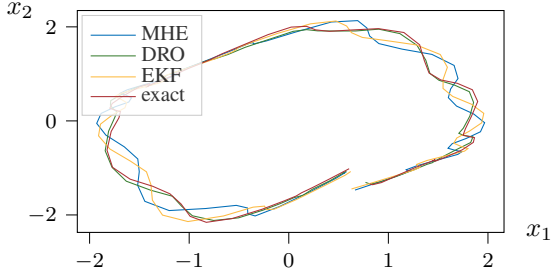}
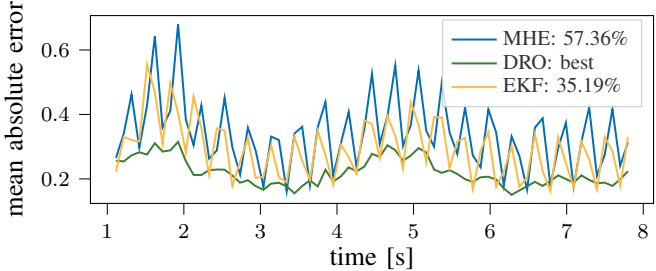
        \vspace{-5pt}
        \subcaption{Phase diagram of the perturbed Van der Pol oscillator and its three state estimates.}
        \vspace{-1pt}
        \label{fig:phase_sin}
    \end{subfigure}
    \begin{subfigure}{0.5\textwidth}
        %\centering
        % This file was created with tikzplotlib v0.9.15.
\begin{tikzpicture}

\begin{axis}[
width=\textwidth,
height=0.45\textwidth,
legend cell align={left},
legend style={fill opacity=0.8, draw opacity=1, text opacity=1, draw=white!80!black, font=\footnotesize, row sep=-2pt},
tick align=outside,
tick pos=left,
x grid style={white!69.0196078431373!black},
xlabel={time [s]},
xlabel style={at={(0.5, -0.4)},anchor=south},
xmin=0.78, xmax=8.13428571428572,
xtick style={color=black},
ticklabel style = {font=\footnotesize},
y grid style={white!69.0196078431373!black},
ylabel={mean absolute error},
ymin=0.123922017228964, ymax=0.706684252099151,
ytick style={color=black}
]
\addplot [line width=.75pt, myb]
table {%
1.11428571428571 0.265268694848518
1.21558441558442 0.341306875124816
1.31688311688312 0.461548114365519
1.41818181818182 0.297057957671187
1.51948051948052 0.427437997242608
1.62077922077922 0.64236256473966
1.72207792207792 0.359341022379236
1.82337662337662 0.41123959536866
1.92467532467532 0.680195059605051
2.02597402597403 0.383870445481183
2.12727272727273 0.305012744049434
2.22857142857143 0.427639946364165
2.32987012987013 0.263037312303619
2.43116883116883 0.288442411562715
2.53246753246753 0.451030592831328
2.63376623376623 0.297797927701096
2.73506493506494 0.21311294641446
2.83636363636364 0.358716068828747
2.93766233766234 0.287855444616664
3.03896103896104 0.176020950458192
3.14025974025974 0.331147841954677
3.24155844155844 0.320588878371658
3.34285714285714 0.161897467921101
3.44415584415584 0.340957758526903
3.54545454545455 0.361649099486549
3.64675324675325 0.179952079867792
3.74805194805195 0.360016135874656
3.84935064935065 0.438153245047996
3.95064935064935 0.209253378187411
4.05194805194805 0.311518559089348
4.15324675324675 0.407987900642309
4.25454545454545 0.23943699327669
4.35584415584416 0.352308193305289
4.45714285714286 0.525986513936242
4.55844155844156 0.308953778514881
4.65974025974026 0.395668107610498
4.76103896103896 0.551331507406188
4.86233766233766 0.302795334090518
4.96363636363636 0.366572173530654
5.06493506493507 0.53564451098584
5.16623376623377 0.348852633322538
5.26753246753247 0.300291016792692
5.36883116883117 0.497769601169619
5.47012987012987 0.344433190255625
5.57142857142857 0.277359779074848
5.67272727272727 0.420803387794573
5.77402597402597 0.272460120882733
5.87532467532468 0.236117988827627
5.97662337662338 0.415251033351141
6.07792207792208 0.344822668794131
6.17922077922078 0.179530112775107
6.28051948051948 0.331524586695947
6.38181818181818 0.271721369865066
6.48311688311688 0.166488315143705
6.58441558441559 0.358639026559502
6.68571428571429 0.387992172296871
6.78701298701299 0.179578416142641
6.88831168831169 0.295149575632062
6.98961038961039 0.374185257334519
7.09090909090909 0.175674629328317
7.19220779220779 0.315358980388646
7.29350649350649 0.42172593652835
7.39480519480519 0.210072696961654
7.4961038961039 0.27423857944075
7.5974025974026 0.416087760689862
7.6987012987013 0.24138336164234
7.8 0.313025698964076
};
\addlegendentry{MHE: 57.36\%}
\addplot [line width=.75pt, myg]
table {%
1.11428571428571 0.256707668286248
1.21558441558442 0.254201851849239
1.31688311688312 0.273187867049613
1.41818181818182 0.282550764240813
1.51948051948052 0.275649524539106
1.62077922077922 0.310679094631184
1.72207792207792 0.284795709028553
1.82337662337662 0.288626094647489
1.92467532467532 0.314774816472866
2.02597402597403 0.255242471047089
2.12727272727273 0.212265414316833
2.22857142857143 0.211980010809762
2.32987012987013 0.227057900648765
2.43116883116883 0.229081998608659
2.53246753246753 0.228642776923719
2.63376623376623 0.210391667539615
2.73506493506494 0.188232843654424
2.83636363636364 0.196018602257151
2.93766233766234 0.177728130114444
3.03896103896104 0.166200855431531
3.14025974025974 0.18533439813191
3.24155844155844 0.188015234064215
3.34285714285714 0.176587517334452
3.44415584415584 0.155309344020304
3.54545454545455 0.177803447297431
3.64675324675325 0.195901790063422
3.74805194805195 0.176434965837777
3.84935064935065 0.228371697113778
3.95064935064935 0.191393054308845
4.05194805194805 0.208183524031545
4.15324675324675 0.235976021035489
4.25454545454545 0.223040753051439
4.35584415584416 0.237945843208769
4.45714285714286 0.277616148713874
4.55844155844156 0.268293197866794
4.65974025974026 0.304436256125572
4.76103896103896 0.289139315071614
4.86233766233766 0.256110312709414
4.96363636363636 0.272404600198641
5.06493506493507 0.295579493748541
5.16623376623377 0.282107107917356
5.26753246753247 0.229201467358193
5.36883116883117 0.218724450237995
5.47012987012987 0.226941571395258
5.57142857142857 0.215208042060497
5.67272727272727 0.19906562055598
5.77402597402597 0.190413939611913
5.87532467532468 0.205137017573127
5.97662337662338 0.206669049174743
6.07792207792208 0.198030962506156
6.17922077922078 0.170317859792212
6.28051948051948 0.150411209723063
6.38181818181818 0.164190000883283
6.48311688311688 0.177368421900929
6.58441558441559 0.190950030236798
6.68571428571429 0.178271195022938
6.78701298701299 0.192902029731604
6.88831168831169 0.210984874311321
6.98961038961039 0.199716276654605
7.09090909090909 0.188294235864902
7.19220779220779 0.211053852950973
7.29350649350649 0.196777526234945
7.39480519480519 0.186542669771175
7.4961038961039 0.18843263456008
7.5974025974026 0.178095757974004
7.6987012987013 0.198587353673633
7.8 0.223549810927477
};
\addlegendentry{DRO: best}
\addplot [line width=.75pt, myy]
table {%
1.11428571428571 0.222463498946476
1.21558441558442 0.330279404365533
1.31688311688312 0.320603628469022
1.41818181818182 0.316496425373028
1.51948051948052 0.554624501005549
1.62077922077922 0.469998111737742
1.72207792207792 0.308633848940825
1.82337662337662 0.498048097749156
1.92467532467532 0.404083922659114
2.02597402597403 0.281259021834904
2.12727272727273 0.455492082686344
2.22857142857143 0.35820220734959
2.32987012987013 0.213129742880239
2.43116883116883 0.356793429228946
2.53246753246753 0.35004191877988
2.63376623376623 0.178594158129134
2.73506493506494 0.25428881179411
2.83636363636364 0.326692000745021
2.93766233766234 0.203044186886405
3.03896103896104 0.206639329943536
3.14025974025974 0.304882120491235
3.24155844155844 0.204518517835576
3.34285714285714 0.189766048193396
3.44415584415584 0.331297297434867
3.54545454545455 0.256330967218687
3.64675324675325 0.196958267857643
3.74805194805195 0.349655603648456
3.84935064935065 0.274198096932908
3.95064935064935 0.182063698082575
4.05194805194805 0.306477412986262
4.15324675324675 0.267339580172133
4.25454545454545 0.215753478144295
4.35584415584416 0.380786797988098
4.45714285714286 0.369933117432978
4.55844155844156 0.264530969931512
4.65974025974026 0.395320526048709
4.76103896103896 0.333873314224302
4.86233766233766 0.232884615132451
4.96363636363636 0.437790759212584
5.06493506493507 0.368628367497175
5.16623376623377 0.251176709622911
5.26753246753247 0.391829264706011
5.36883116883117 0.390999583745883
5.47012987012987 0.234158124289705
5.57142857142857 0.316058198193437
5.67272727272727 0.326865449508327
5.77402597402597 0.167927413662782
5.87532467532468 0.286822961775003
5.97662337662338 0.346029968707311
6.07792207792208 0.17552967765928
6.17922077922078 0.223211634602046
6.28051948051948 0.302816180506147
6.38181818181818 0.175913963330991
6.48311688311688 0.204463919706857
6.58441558441559 0.338869603098474
6.68571428571429 0.224145596921748
6.78701298701299 0.162146690858782
6.88831168831169 0.329716397334951
6.98961038961039 0.248731299752999
7.09090909090909 0.174503987727748
7.19220779220779 0.31642808354138
7.29350649350649 0.260691095474502
7.39480519480519 0.167418115094799
7.4961038961039 0.330102847750827
7.5974025974026 0.286207473758651
7.6987012987013 0.177867584131665
7.8 0.330582999148076
};
\addlegendentry{EKF: 35.19\%}
\end{axis}

\end{tikzpicture}
        \vspace{-5pt}
        \subcaption{Average error of three estimator at each time step and over all 50 test realizations. The legend shows the total relative error increments.}
        \vspace{-3pt}
        \label{fig:error_sin}
    \end{subfigure}
    \caption{Performance analysis of the EKF, MHE and distributionally robust MHE (DRO) under sine noise.}
    \label{fig:sin}
\end{figure}
\vspace{-22pt}
% \begin{figure}[H]
%     \begin{subfigure}{0.5\textwidth}
%         %\centering
%         \input{plots/phase_bim}
%         \vspace{-5pt}
%         \subcaption{Phase diagram of the perturbed Van der Pol oscillator and its three state estimates.}
%         \label{fig:phase_bim}
%     \end{subfigure}
%     \vspace{5pt}
%     \begin{subfigure}{0.5\textwidth}
%         %\centering
%         \input{plots/error_bim}
%         \vspace{-5pt}
%         \subcaption{Average error of three estimator at each time step and over all 50 test realizations. The legend shows the total relative error increases.}
%         \label{fig:error_bim}
%     \end{subfigure}
%     \caption{Performance analysis of the EKF, MHE and distributionally robust MHE (DRO) under bimodal noise.}
%     \label{fig:bim}
% \end{figure}
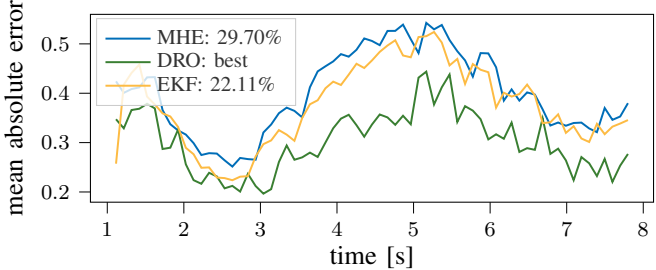
\begin{figure}[H]
    %\centering
    % This file was created with tikzplotlib v0.9.15.
\begin{tikzpicture}

\begin{axis}[
width=0.5\textwidth,
height=0.225\textwidth,
legend cell align={left},
legend style={fill opacity=0.8, draw opacity=1, text opacity=1, draw=white!80!black, at={(0.01,0.98)},anchor=north west, font=\footnotesize, row sep=-2pt},
tick align=outside,
tick pos=left,
x grid style={white!69.0196078431373!black},
xlabel={time [s]},
xlabel style={at={(0.5, -0.4)},anchor=south},
xmin=0.78, xmax=8.13428571428572,
xtick style={color=black},
ticklabel style = {font=\footnotesize},
y grid style={white!69.0196078431373!black},
ylabel={mean absolute error},
ymin=0.179297418249312, ymax=0.559789547160593,
ytick style={color=black}
]
\addplot [line width=.75pt, myb]
table {%
1.11428571428571 0.424063570572292
1.21558441558442 0.400778828557207
1.31688311688312 0.408508220582805
1.41818181818182 0.411857665744449
1.51948051948052 0.432171405151184
1.62077922077922 0.432677800303994
1.72207792207792 0.36597545179431
1.82337662337662 0.337220026285853
1.92467532467532 0.324651260107902
2.02597402597403 0.316315988348369
2.12727272727273 0.298925874160548
2.22857142857143 0.274860043261464
2.32987012987013 0.278594322765839
2.43116883116883 0.277782913395554
2.53246753246753 0.264096105307053
2.63376623376623 0.251469749245122
2.73506493506494 0.268902952287643
2.83636363636364 0.266574404866785
2.93766233766234 0.265364327658704
3.03896103896104 0.32050129927704
3.14025974025974 0.336677130690189
3.24155844155844 0.359554931854449
3.34285714285714 0.37081091783005
3.44415584415584 0.364266620563432
3.54545454545455 0.351139645674097
3.64675324675325 0.412137017186103
3.74805194805195 0.444250645640918
3.84935064935065 0.45458596276145
3.95064935064935 0.464459823236918
4.05194805194805 0.479240867414253
4.15324675324675 0.47407028708235
4.25454545454545 0.487988799027958
4.35584415584416 0.511008065968161
4.45714285714286 0.505874920834335
4.55844155844156 0.497360481712782
4.65974025974026 0.526230268590986
4.76103896103896 0.526317576080992
4.86233766233766 0.53898287539474
4.96363636363636 0.509465329976225
5.06493506493507 0.481437945484186
5.16623376623377 0.542494450391899
5.26753246753247 0.529453223525476
5.36883116883117 0.538281727114125
5.47012987012987 0.507753542007963
5.57142857142857 0.485147896514447
5.67272727272727 0.466525786527855
5.77402597402597 0.433604167014214
5.87532467532468 0.481325760288011
5.97662337662338 0.480729392672398
6.07792207792208 0.452597781726513
6.17922077922078 0.385085317489399
6.28051948051948 0.407772847901622
6.38181818181818 0.385089765353413
6.48311688311688 0.401680601547317
6.58441558441559 0.397265474713424
6.68571428571429 0.367356291512983
6.78701298701299 0.33459734580245
6.88831168831169 0.340691274219943
6.98961038961039 0.334158848206799
7.09090909090909 0.339727663817109
7.19220779220779 0.340516701122317
7.29350649350649 0.328698033554182
7.39480519480519 0.321137266347446
7.4961038961039 0.370099182529463
7.5974025974026 0.346105788274403
7.6987012987013 0.352867192049422
7.8 0.37986449862744
};
\addlegendentry{MHE: 29.70\%}
\addplot [line width=.75pt, myg]
table {%
1.11428571428571 0.347300800943423
1.21558441558442 0.328915556887792
1.31688311688312 0.366015407704966
1.41818181818182 0.368582714551359
1.51948051948052 0.378681296308817
1.62077922077922 0.369322289664185
1.72207792207792 0.287044209251768
1.82337662337662 0.289707948472429
1.92467532467532 0.326396746638199
2.02597402597403 0.256065215507307
2.12727272727273 0.22392642432291
2.22857142857143 0.216591518360134
2.32987012987013 0.238820082463439
2.43116883116883 0.231489602423208
2.53246753246753 0.207383779075758
2.63376623376623 0.212326581758247
2.73506493506494 0.200814437782423
2.83636363636364 0.23668991221765
2.93766233766234 0.212022296973028
3.03896103896104 0.196592515018007
3.14025974025974 0.205701247454482
3.24155844155844 0.260660109083997
3.34285714285714 0.294173831478185
3.44415584415584 0.265261108731792
3.54545454545455 0.270079180421451
3.64675324675325 0.279719734643596
3.74805194805195 0.270911996247968
3.84935064935065 0.300151499632677
3.95064935064935 0.329072437166216
4.05194805194805 0.348573989839857
4.15324675324675 0.356356493718967
4.25454545454545 0.335351877681906
4.35584415584416 0.311981530304362
4.45714285714286 0.336667498010165
4.55844155844156 0.356968130177232
4.65974025974026 0.351044006485634
4.76103896103896 0.336050721744162
4.86233766233766 0.354405045439331
4.96363636363636 0.337146174191777
5.06493506493507 0.431565040173416
5.16623376623377 0.443759075308328
5.26753246753247 0.3772249624697
5.36883116883117 0.412054563282109
5.47012987012987 0.437919366380543
5.57142857142857 0.341816811422231
5.67272727272727 0.373829440898681
5.77402597402597 0.364796561624305
5.87532467532468 0.347505787087178
5.97662337662338 0.306521769642818
6.07792207792208 0.316857799264279
6.17922077922078 0.307785788603498
6.28051948051948 0.340392218328154
6.38181818181818 0.281052525255179
6.48311688311688 0.288737620774465
6.58441558441559 0.285370343425921
6.68571428571429 0.34916149246159
6.78701298701299 0.277144875203656
6.88831168831169 0.287706445110112
6.98961038961039 0.263978225883318
7.09090909090909 0.223895822675042
7.19220779220779 0.270682762837604
7.29350649350649 0.258146817010134
7.39480519480519 0.232209181097806
7.4961038961039 0.266919157576492
7.5974025974026 0.220479898860603
7.6987012987013 0.253397307591238
7.8 0.276975798684231
};
\addlegendentry{DRO: best}
\addplot [line width=.75pt, myy]
table {%
1.11428571428571 0.257319291033194
1.21558441558442 0.417338062621449
1.31688311688312 0.441137862737068
1.41818181818182 0.458048231693034
1.51948051948052 0.393235639327496
1.62077922077922 0.377216288128391
1.72207792207792 0.35922492586448
1.82337662337662 0.352938896800387
1.92467532467532 0.33166300927116
2.02597402597403 0.289448935564799
2.12727272727273 0.27646079594635
2.22857142857143 0.24899675427752
2.32987012987013 0.24978689141935
2.43116883116883 0.229220728547733
2.53246753246753 0.228088808699698
2.63376623376623 0.223794866942804
2.73506493506494 0.230974099583461
2.83636363636364 0.232022353089421
2.93766233766234 0.272018260627017
3.03896103896104 0.296888764433571
3.14025974025974 0.304445647266234
3.24155844155844 0.325319758208337
3.34285714285714 0.316165073299499
3.44415584415584 0.303663121752455
3.54545454545455 0.349942180856736
3.64675324675325 0.377616849325458
3.74805194805195 0.385793516040279
3.84935064935065 0.410077633462462
3.95064935064935 0.423475489718726
4.05194805194805 0.416719459934372
4.15324675324675 0.437687004431162
4.25454545454545 0.459677892525915
4.35584415584416 0.451122956586334
4.45714285714286 0.466479537937771
4.55844155844156 0.483816928750852
4.65974025974026 0.496682911593701
4.76103896103896 0.507264468215188
4.86233766233766 0.476452420613914
4.96363636363636 0.472849810967117
5.06493506493507 0.513841770002825
5.16623376623377 0.51559038545208
5.26753246753247 0.524122082768197
5.36883116883117 0.502654488721997
5.47012987012987 0.456427036164855
5.57142857142857 0.469479289959978
5.67272727272727 0.418893620286765
5.77402597402597 0.459116040613238
5.87532467532468 0.447594453957518
5.97662337662338 0.442562912146245
6.07792207792208 0.370843517669732
6.17922077922078 0.401050008487791
6.28051948051948 0.393278038209485
6.38181818181818 0.39835599418263
6.48311688311688 0.417478036757002
6.58441558441559 0.395617284777455
6.68571428571429 0.339343730233193
6.78701298701299 0.34121561658331
6.88831168831169 0.357219050614968
6.98961038961039 0.319736864764382
7.09090909090909 0.333178408834365
7.19220779220779 0.308300713913134
7.29350649350649 0.301283875135864
7.39480519480519 0.337998588380171
7.4961038961039 0.317032413193636
7.5974025974026 0.332286915415887
7.6987012987013 0.338639932796534
7.8 0.345634586876935
};
\addlegendentry{EKF: 22.11\%}
\end{axis}

\end{tikzpicture}
    \vspace{-18pt}
    \caption{Average error of the EKF, MHE and distributionally robust MHE (DRO) at each time step and over all 50 test realizations of bimodal noise. The legend shows the total relative error increments.}
    \label{fig:bim}
\end{figure}
\vspace{-19pt}
\begin{figure}[H]
    %\centering
    \makeatletter
\pgfplotsset{
    boxplot prepared from table/.code={
        \def\tikz@plot@handler{\pgfplotsplothandlerboxplotprepared}%
        \pgfplotsset{
            /pgfplots/boxplot prepared from table/.cd,
            #1,
        }
    },
    /pgfplots/boxplot prepared from table/.cd,
        table/.code={\pgfplotstablecopy{#1}\to\boxplot@datatable},
        row/.initial=0,
        make style readable from table/.style={
            #1/.code={
                \pgfplotstablegetelem{\pgfkeysvalueof{/pgfplots/boxplot prepared from table/row}}{##1}\of\boxplot@datatable
                \pgfplotsset{boxplot/#1/.expand once={\pgfplotsretval}}
            }
        },
        make style readable from table=lower whisker,
        make style readable from table=upper whisker,
        make style readable from table=lower quartile,
        make style readable from table=upper quartile,
        make style readable from table=median,
        make style readable from table=lower notch,
        make style readable from table=upper notch
}
\makeatother

\pgfplotstableread{
    lw    lq    med   uq    uw
    20.51 21.45 22.40 23.44 27.59
    10.36 13.47 14.50 16.22 20.25
    13.17 16.85 17.90 20.12 26.78
    16.81 18.65 19.33 20.55 24.70
}\sindatatable

\pgfplotstableread{
    lw    lq    med   uq    uw
    19.37 24.88 26.91 28.26 33.68
    14.41 18.25 20.10 21.84 28.13
    16.35 20.39 22.16 25.12 35.57
    19.23 23.49 25.01 27.24 30.45
}\bimdatatable

\begin{tikzpicture} 
\begin{axis}[
width=0.3\textwidth,
height=0.27\textwidth,
xtick={1,2,3,4},
xticklabel style={align=center},
xticklabels={
\hspace{-4pt}MHE,
{DRO\\$\varepsilon \texttt{=} 0.2$},
{DRO\\$\varepsilon \texttt{=} 0$},
EKF},
ylabel={Total error},
xlabel={(a) Sine},
ymin=10,
ymax=36,
ymajorgrids=true,
cycle list={{myb},{myg},{myr},{myy}},
boxplot/draw direction=y,
]
  \addplot+[mark = *, mark options = {black},
  fill, fill opacity=0.67, draw=black,
  line width=0.25mm,
  boxplot prepared from table={
    table=\sindatatable,
    lower whisker=lw,
    upper whisker=uw,
    lower quartile=lq,
    upper quartile=uq,
    median=med
  }, boxplot prepared
  ]
  coordinates {(0, 22.710652026239)};
  \addplot+[mark = *, mark options = {black},
  fill, fill opacity=0.67, draw=black,
  line width=0.25mm,
  boxplot prepared from table={
    table=\sindatatable,
    row=1,
    lower whisker=lw,
    upper whisker=uw,
    lower quartile=lq,
    upper quartile=uq,
    median=med
  }, boxplot prepared
  ]
  coordinates {(1, 14.765837944636091)};
  \addplot+[mark = *, mark options = {black},
  fill, fill opacity=0.67, draw=black,
  line width=0.25mm,
  boxplot prepared from table={
    table=\sindatatable,
    row=2,
    lower whisker=lw,
    upper whisker=uw,
    lower quartile=lq,
    upper quartile=uq,
    median=med
  }, boxplot prepared
  ]
  coordinates {(2, 18.52822745649203)};
  \addplot+[mark = *, mark options = {black},
  fill, fill opacity=0.67, draw=black,
  line width=0.25mm,
  boxplot prepared from table={
    table=\sindatatable,
    row=3,
    lower whisker=lw,
    upper whisker=uw,
    lower quartile=lq,
    upper quartile=uq,
    median=med
  }, boxplot prepared
  ]
  coordinates {(3, 19.613612710581695)};
\end{axis}
\draw(0.0,-0.8) -- (0.0,0);
\draw(1.0,-0.8) -- (1.0,0);
\draw(1.9,-0.8) -- (1.9,0);
\draw(2.7,-0.8) -- (2.7,0);
\draw(3.75,-0.8) -- (3.75,0);
\end{tikzpicture}
\hspace{-13pt}
\begin{tikzpicture} 
\begin{axis}[
width=0.3\textwidth,
height=0.27\textwidth,
xtick={1,2,3,4},
xticklabel style={align=center},
xticklabels={
\hspace{-4pt}MHE,
{DRO\\$\varepsilon \texttt{=} 0.2$},
{DRO\\$\varepsilon \texttt{=} 0$},
EKF},
xlabel={(b) Bimodal},
ymin=10,
ymax=36,
yticklabels={,,},
ymajorgrids=true,
cycle list={{myb},{myg},{myr},{myy}},
boxplot/draw direction=y,
]
  \addplot+[mark = *, mark options = {black},
  fill, fill opacity=0.67, draw=black,
  line width=0.25mm,
  boxplot prepared from table={
    table=\bimdatatable,
    lower whisker=lw,
    upper whisker=uw,
    lower quartile=lq,
    upper quartile=uq,
    median=med
  }, boxplot prepared
  ]
  coordinates {(0, 26.76778014353073)};
  \addplot+[mark = *, mark options = {black},
  fill, fill opacity=0.67, draw=black,
  line width=0.25mm,
  boxplot prepared from table={
    table=\bimdatatable,
    row=1,
    lower whisker=lw,
    upper whisker=uw,
    lower quartile=lq,
    upper quartile=uq,
    median=med
  }, boxplot prepared
  ]
  coordinates {(1, 20.23285340771144)};
  \addplot+[mark = *, mark options = {black},
  fill, fill opacity=0.67, draw=black,
  line width=0.25mm,
  boxplot prepared from table={
    table=\bimdatatable,
    row=2,
    lower whisker=lw,
    upper whisker=uw,
    lower quartile=lq,
    upper quartile=uq,
    median=med
  }, boxplot prepared
  ]
  coordinates {(2, 22.860843300135098)};
  \addplot+[mark = *, mark options = {black},
  fill, fill opacity=0.67, draw=black,
  line width=0.25mm,
  boxplot prepared from table={
    table=\bimdatatable,
    row=3,
    lower whisker=lw,
    upper whisker=uw,
    lower quartile=lq,
    upper quartile=uq,
    median=med
  }, boxplot prepared
  ]
  coordinates {(3, 25.12985178499507)};
\end{axis}
\draw(0.0,-0.8) -- (0.0,0);
\draw(1.0,-0.8) -- (1.0,0);
\draw(1.9,-0.8) -- (1.9,0);
\draw(2.7,-0.8) -- (2.7,0);
\draw(3.75,-0.8) -- (3.75,0);
\end{tikzpicture}
    \vspace{-18pt}
    \caption{Statistics of the error of the EKF, MHE and distributionally robust MHE (DRO), and DRO with zero-radii Wasserstein balls over all 50 test realizations. The mean is shown with a dot marker. The total error is the sum of the errors at each time step.}
    \label{fig:boxplot}
\end{figure}
\vspace{-14pt}

\section{Conclusions}\label{section_conclu}
In this paper, we present a novel MHE method based on the empirical distribution of a system's noise and distributional 
robustness theory. \textcolor{redtmp}{We prove that our approach can be implemented as computationally-inexpensive LPs.} %\textcolor{redtmp}{Experiments on a challenging nonlinear system show high performance under complex noise profiles.}
%our approach can be implemented as a set of small LPs, which has a fairly low computational complexity. Moreover, we provide detailed numerical experiments on a challenging nonlinear system, and highlight the high performance of our method under complex noise profiles.

Future work will focus on studying quadratic costs, as they relate to energy or covariance. %in a system and are therefore more intuitive for engineers. 
Moreover, we will study how to include constraints in our formulation, as it is frequently done in MHE.

\vspace{-5pt}

\bibliographystyle{IEEEtran}
\bibliography{references}

\end{document}